\documentclass[conference,10pt]{IEEEtran}
\IEEEoverridecommandlockouts
\usepackage{cite}
\usepackage{amsmath,amssymb,amsfonts}
\usepackage{graphicx}
\usepackage{textcomp}
\usepackage{xcolor}
\usepackage{amsthm}
\usepackage{graphics}
\usepackage{multicol}
\usepackage{lipsum}
\usepackage{color}
\usepackage{mathrsfs}
\usepackage{mathtools}
\usepackage{amsbsy}
\usepackage[colorlinks=true,bookmarks=false,citecolor=blue,urlcolor=blue]{hyperref}
\usepackage{xcolor}
\usepackage{mathrsfs}
\usepackage{setspace}
\usepackage{float}
\usepackage{graphicx}
\usepackage{pstool}
\usepackage{cite}
\usepackage{lettrine}
\usepackage[normalem]{ulem}
\usepackage{latexsym}
\usepackage{algpseudocode}
\usepackage{algorithm,algpseudocode}
\usepackage{algorithmicx}
\usepackage{multirow}
\usepackage{wasysym}
\usepackage{cite}
\usepackage{mathrsfs}
\usepackage{amsmath,cite,amsfonts,amssymb,color}

\ifCLASSOPTIONcompsoc
\usepackage[caption=false,font=normalsize,labelfon
t=sf,textfont=sf]{subfig}
\else
\usepackage[caption=false,font=footnotesize]{subfig}
\fi
\allowdisplaybreaks

\newtheorem{lemma}{Lemma}
\newcommand{\imj}{\mathrm{j}}

\graphicspath{{figures/}}
\def\BibTeX{{\rm B\kern-.05em{\sc i\kern-.025em b}\kern-.08em
    T\kern-.1667em\lower.7ex\hbox{E}\kern-.125emX}}
\addtolength{\topmargin}{0.01in}
\begin{document}

\title{Structured Sensing Matrix Design for In-sector Compressed mmWave Channel Estimation
}

\author{\IEEEauthorblockN{Hamed Masoumi$^{\ast }$, Nitin Jonathan Myers$^{\ast }$, Geert Leus$^{\dagger}$, Sander Wahls$^{\ast }$ and Michel Verhaegen$^{\ast }$}
\IEEEauthorblockA{$^{\ast }$Delft Center for Systems and Control, Delft University of Technology, The Netherlands\\
$^{\dagger}$Department of Microelectronics, Delft University of Technology, The Netherlands\\
Email: \{H.Masoumi, N.J.Myers, G.J.T.Leus, S.Wahls, M.Verhaegen\}@tudelft.nl}
}

\maketitle
\begin{abstract}
Fast millimeter wave (mmWave) channel estimation techniques based on compressed sensing (CS) suffer from low signal-to-noise ratio (SNR) in the channel measurements, due to the use of wide beams. To address this problem, we develop an in-sector CS-based mmWave channel estimation technique that focuses energy on a sector in the angle domain. Specifically, we construct a new class of structured CS matrices to estimate the channel within the sector of interest. To this end, we first determine an optimal sampling pattern when the number of measurements is equal to the sector dimension and then use its subsampled version in the sub-Nyquist regime. Our approach results in low aliasing artifacts in the sector of interest and better channel estimates than benchmark algorithms.
\end{abstract}

\begin{IEEEkeywords}
Sparse recovery, mm-Wave, channel estimation
\end{IEEEkeywords}

\section{Introduction}
Millimeter wave systems enable Gbps data rates by employing large antenna arrays and leveraging beamforming. The overhead for channel estimation, however, increases with large arrays. This calls for fast and computationally efficient techniques for high-dimensional channel estimation. In this regard, compressed sensing (CS) \cite{candes2008introduction} has been used for sub-Nyquist mmWave channel estimation \cite{alkhateeb2015compressed,myers2019falp,tsai2018structured,wang2021jittering}, as CS can leverage the sparse angle domain representation of mmWave channels \cite{heath2016overview}.

\par Standard CS techniques usually employ quasi-omnidirectional beams to acquire channel measurements \cite{alkhateeb2015compressed,myers2019falp,ali2017millimeter}. The received SNR with such beams, however, is poor. One way to overcome the low SNR issue is to focus energy in a certain band of spatial angles defined as a sector \cite{tsai2018structured,chen2020convolutional,wang2021jittering}. We call CS methods that use such beams as \textit{in-sector CS}. The key challenge with in-sector CS is to construct an ensemble of beams that all focus within the sector of interest. Such an ensemble can be constructed by circulantly shifting a beam focusing within a sector \cite{myers2019falp}. The resulting CS technique, called convolutional CS \cite{li2012convolutional}, acquires measurements by projecting the channel onto different circulant shifts of a beam training vector. 

\par In this paper, we explain that the standard approach of using random circulant shifts in convolutional CS results in ``uniform'' aliasing artifacts in the angle domain. Such an aliasing profile is not suitable for in-sector CS, where the goal is to estimate the sparse channel over a specific support set. To this end, we design the circulant shifts so that the aliasing artifacts are low within the sector of interest and are high outside the sector.
To aid our design, we first determine the optimal set of circulant shifts when the number of CS measurements is equal to the dimension of the sector. Finally, we use the subsampled version of the optimal set of circulant shifts for channel estimation in the sub-Nyquist sampling regime. Our design can be integrated into the IEEE 802.11.ad/ay standard \cite{nitsche2014ieee}, and it achieves lower in-sector channel reconstruction error than comparable benchmarks.
 \par We now discuss prior work related to in-sector CS. In \cite{tsai2018structured}, randomly selected DFT columns were used to modulate a spread sequence and generate different in-sector beams. Such an approach, however, degrades the received SNR in the sector of interest. In \cite{wang2021jittering}, a contiguous band of directions is illuminated by sub-arrays. Then, the beam at each sub-array is phase modulated to construct an ensemble of in-sector CS beams. The greedy method in \cite{ali2017millimeter} selects CS beams from a large set of random codes, based on their capability to concentrate energy within a band of interest. The greedy CS matrices in \cite{wang2021jittering} and \cite{ali2017millimeter}, however, still result in substantial aliasing artifacts within the sector of interest. Finally, \cite{chen2020convolutional} performs linear convolution over the output of a fully digital array, and then decimates the output of each filter to achieve parallel processing.  In this paper, we realize subsampled circular convolution using the analog beamforming architecture, and propose a randomized construction of CS matrices in the sub-Nyquist regime. While the approach in \cite{chen2020convolutional} reduces the channel dimension to achieve steady state linear convolution, our circular convolution-based method preserves the channel dimension. The downside of our CS matrix design is the assumption that the sources are on-grid, unlike \cite{chen2020convolutional} where the sources can be off-grid. For the special case when the number of CS measurements is equal to the sector dimension, the CS matrix proposed in this paper is same as the deterministic construction in \cite{chen2020convolutional}. 

\emph{Notation}: $a$, $\mathbf{a}$ and $\mathbf{A}$ denote a scalar, vector, and a matrix. The indexing of vectors and matrices begins at $0$. $\mathbf{a}_i$ is the $i^{\mathrm{th}}$ column of $\mathbf{A}$. 
$(\cdot)^{T}$ and $(\cdot)^{\ast}$ denote the transpose and conjugate-transpose operators. $\mathrm{Diag}(\mathbf{a})$ is a diagonal matrix with $\mathbf{a}$ on the diagonal. The matrix $|\mathbf{A}|$ contains the element-wise magnitudes of $\mathbf{A}$. $\circledast$ and $\odot$ denote circular convolution and Hadamard product. The inner product $\langle\mathbf{a},\mathbf{b}\rangle=\mathbf{b}^{\ast}\mathbf{a}$. $[N]$ denotes the set $\{0,1,\cdots,N-1\}$. $\mathcal{CN}(0,\sigma^{2})$ is the zero-mean complex Gaussian distribution with variance $\sigma^{2}$. $\mathsf{j}=\sqrt{-1}$.
\section{Channel and system model}\label{sec2sysmodel}
We consider a point-to-point mmWave system, shown in Fig.~\ref{fig:sys}, where the transmitter (TX) has a uniform linear array (ULA) with $N$ omni-directional radiating elements. The TX uses an analog beamforming architecture that allows both amplitude and phase control at each antenna. The receiver (RX) has a single omni-directional antenna. The TX transmits training signals and the RX uses the corresponding received measurements to estimate the channel. 
\subsection{Channel model}
 We use $\mathbf{h}\in\mathbb{C}^{N\times 1}$ to denote the geometric narrowband mmWave channel between the TX and the RX. We assume $K$ propagation rays from the TX to the RX \cite{heath2016overview}. Each ray $k\in \{1,2,\cdots,K\}$ has an angle of departure (AoD) $\theta_{k}$ and a complex gain $\alpha_{k}$.
 For a half-wavelength spaced antenna array at the TX, we use $\mathbf{a}(N,\theta_{k})\in\mathbb{C}^{N\times 1}$ to denote the array steering vector corresponding to the $k^{\mathrm{th}}$ AoD at the TX, i.e.,
\begin{align}\label{eqn:ArrayResponse}
    \mathbf{a}(N,\theta_{k}) = [1, e^{\mathsf{j}\pi \sin(\theta_{k})}, e^{\mathsf{j}2\pi \sin(\theta_{k})}, ..., e^{\mathsf{j}(N-1)\pi\sin(\theta_{k})}]^{T}.
\end{align}
The channel between the TX and the RX, i.e., $\mathbf{h}$, is then
\begin{align}\label{eqn:PhyChannel}
    \mathbf{h} = \sum\limits_{k=1}^{K}\alpha_{k}\mathbf{a}(N,\theta_{k}).
\end{align}
The average power of the channel is $\mathbb{E}[\mathbf{h}^{\ast}\mathbf{h}]=N$.
\begin{figure}[!t]
\centering
\includegraphics[scale=0.86]{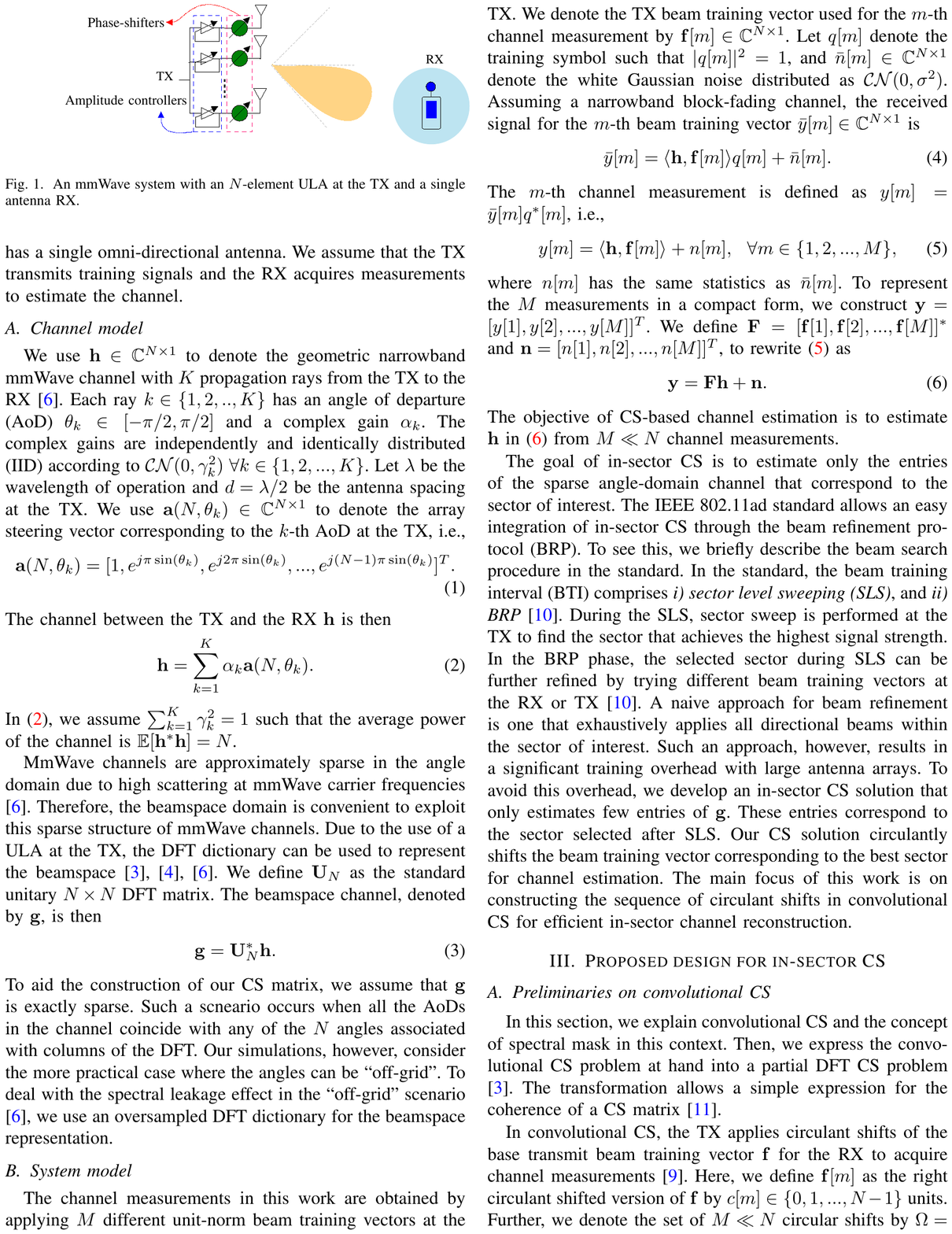}\label{fig:SNRimp_SW}
\vspace{-1.5mm}
\caption{\small An mmWave system with a ULA at the TX and a single antenna RX. We focus on spatial channel estimation within a sector.
\normalsize}\label{fig:sys}
\end{figure}
\par MmWave channels are approximately sparse in the angle domain due to high scattering at mmWave wavelengths \cite{heath2016overview}. Thus, the beamspace representation is convenient to exploit the sparsity of mmWave channels. Due to the use of a ULA at the TX, the DFT dictionary can be used to represent the beamspace \cite{heath2016overview}. We define $\mathbf{U}_{N}$ as the standard unitary $N \times N$ DFT matrix. The beamspace channel, denoted by $\mathbf{g}$, is then
\begin{align}\label{eqn:beamspace}
    \mathbf{g} = \mathbf{U}_{N}^{\ast}\mathbf{h}.
\end{align}
To aid the construction of our CS matrix, we assume that $\mathbf{g}$ is exactly sparse. Such a scenario occurs when all the AoDs in the channel coincide with any of the $N$ angles associated with the columns of the DFT. Our simulations, however, use realistic channels where the angles can be ``off-grid'' and $\mathbf{g}$ is only approximately sparse. Accounting for these off-grid effects within the design of in-sector beam training vectors is interesting for future research.
\subsection{System model and motivation for in-sector CS}
The channel measurements at the RX are obtained by applying $M$ distinct unit-norm beam training vectors at the TX. We denote the TX beam training vector used for the $m^{\mathrm{th}}$ channel measurement by $\mathbf{f}_m\in\mathbb{C}^{N\times 1}$. Further, we use $n[m]$ to denote the white Gaussian noise at the RX distributed as $\mathcal{CN}(0,\sigma^{2})$.
Assuming a block-fading channel, the received channel measurement with the $m^{\mathrm{th}}$ beam training vector is
\begin{align}\label{eqn:Measurement}
    y[m] = \langle\mathbf{h},\mathbf{f}_m\rangle + n[m],\ \ \forall m\in\{0,1,2,..,M-1\}.
\end{align}
To represent the $M$ measurements in compact form, we construct $\mathbf{y} = [y[0],y[1], ..., y[M-1]]^{T}$. Defining $\mathbf{F} = [\mathbf{f}_0,\mathbf{f}_1, ..., \mathbf{f}_{M-1}]^{\ast}$ and $\mathbf{n} = [n[0], n[1], ..., n[M-1]]^{T}$, we can then rewrite \eqref{eqn:Measurement} as
\begin{equation}\label{eqn:RXM}
    \mathbf{y} = \mathbf{F}\mathbf{h} + \mathbf{n}.
\end{equation}
The objective of CS-based channel estimation is to estimate $\mathbf{h}$ in \eqref{eqn:RXM} from $M\ll N$ channel measurements. 
 \par The goal of in-sector CS is to estimate only the entries of the sparse angle domain channel that correspond to the sector of interest, i.e., a sub-vector of $\mathbf{g}$. In-sector CS integrates well within the IEEE 802.11ad standard. To see this, we briefly describe beam search in the standard. In this procedure, the beam training interval (BTI) comprises \textit{i) sector level sweep (SLS)}, and \textit{ii) beam refinement protocol (BRP)} \cite{nitsche2014ieee}. During SLS, the TX applies different sectored beam patterns and determines the sector that achieves the highest received SNR. In BRP, the sector selected in SLS is further refined by trying different beam training vectors at the TX or the RX \cite{nitsche2014ieee}. A naive approach for BRP is one that exhaustively scans all the directional beams within the selected sector. Such an approach, however, results in a substantial training overhead for large sector dimensions. To avoid this overhead, we develop an in-sector CS solution that compressively estimates $\mathbf{g}$ at the indices corresponding to the sector of interest.
 \par Our CS-based solution acquires distinct spatial channel measurements by circulantly shifting a wide beam that just covers the sector of interest. As circulant shifts do not change the magnitude of the DFT, all the beams in our method concentrate energy within the same sector. The main focus of this work is on constructing the sequence of circulant shifts in convolutional CS for efficient in-sector channel reconstruction.
\section{Proposed design for in-sector CS}\label{sec4}
\subsection{Preliminaries on convolutional CS}
\par In convolutional CS, the TX applies circulant shifts of a base transmit beam training vector $\mathbf{f}_{\mathrm{b}}$ for the RX to acquire channel measurements \cite{li2012convolutional}. Specifically, $\mathbf{f}_{m}$ is obtained by circulantly shifting $\mathbf{f}_{\mathrm{b}}$ by $c[m]$ units. Here, $c[m]$ is an integer in $\{0,1,...,N-1\}$. The set of $M$ distinct circulant shifts used at the TX is denoted by $\Omega=\{c[0],c[1],\cdots,c[M-1]\}$. We define $\mathbf{f}_{\mathrm{b}}^{\mathrm{FC}}$ as the flipped and conjugated version of $\mathbf{f}_{\mathrm{b}}$, i.e., $f_{\mathrm{b}}^{\mathrm{FC}}[i] = f_{\mathrm{b}}^{\ast}[\langle-i\rangle_{N}]$, where $\langle i\rangle_{N}$ is the modulo-$N$ remainder of $i$. We define $\mathcal{P}_{\Omega}(\cdot)$ as the subsampling operator that selects the entries of a vector at the indices in $\Omega$. The CS measurements at the RX can then be expressed as the subsampled circular convolution of $\mathbf{f}_{\mathrm{b}}^{\mathrm{FC}}$ and $\mathbf{h}$ \cite{myers2019falp}, i.e.,
\begin{equation}\label{eqn:III1}
    \mathbf{y} = \mathcal{P}_{\Omega}\left(\mathbf{h}\circledast\mathbf{f}_{\mathrm{b}}^{\mathrm{FC}}\right) + \mathbf{n}.
\end{equation}
Due to the circulant structure of the beam training vectors in convolutional CS, $\{\mathbf{f}_m\}^{M-1}_{m=0}$ have the same DFT magnitude as $\mathbf{f}_{\mathrm{b}}$ \cite{myers2019falp}. As a result, all these vectors ``focus'' energy within the desired sector of interest, when $\mathbf{f}_{\mathrm{b}}$ is chosen after SLS.
\par We now discuss how convolutional CS can be interpreted as partial DFT CS \cite{myers2019falp}. We define the masked beamspace channel as $\mathbf{x} = \mathbf{U}_{N}^{\ast}(\mathbf{h}\circledast \mathbf{f}_{\mathrm{b}}^{\mathrm{FC}})$, i.e., the inverse DFT of $\mathbf{h}\circledast \mathbf{f}_{\mathrm{b}}^{\mathrm{FC}}$. By the circular convolution property of the DFT \cite{manolakis2011applied},
\begin{equation}
\label{eq:masked_beam}
\mathbf{x} =[\mathbf{U}_{N}^{\ast}\mathbf{h}]\odot [ \sqrt{N} \mathbf{U}_{N}^{\ast}\mathbf{f}_{\mathrm{b}}^{\mathrm{FC}}].
\end{equation}
 The angle domain spectral mask associated with the masked beamspace is defined as
 \begin{equation}\label{eqn:Ang_spctrl_msk}
\mathbf{p}=\sqrt{N} \mathbf{U}_{N}^{\ast}\mathbf{f}_{\mathrm{b}}^{\mathrm{FC}}.
 \end{equation}
 We note that $\mathbf{x}$ is sparse as it is an element-wise product of the sparse beamspace channel $\mathbf{g}=\mathbf{U}_{N}^{\ast}\mathbf{h}$ and the spectral mask $\mathbf{p}$. Since $\mathbf{h}\circledast \mathbf{f}_{\mathrm{b}}^{\mathrm{FC}}= \mathbf{U}_{N} \mathbf{x}$, we can rewrite \eqref{eqn:III1} as
\begin{equation}\label{eqn:samplingOperator}
\mathbf{y} = \mathcal{P}_{\Omega}\left(\mathbf{U}_{N}\mathbf{x}\right) + \mathbf{n}.
\end{equation}
The CS measurements at the RX are thus a noisy subsampled version of the DFT of the sparse masked beamspace.
\vspace{-1mm}
\subsection{In-sector convolutional CS}
\par Now, we describe the structure of the spectral mask $\mathbf{p}$ used for the in-sector CS problem. We define the sector of interest as $\mathcal{L}=\{d_1,d_1+1,d_1+2, \cdots, d_2\}$, for some integers $d_1<N$ and $d_2<N$. Note that $d_2>d_1$. We assume that the base transmit beam training vector, determined after SLS, results in a spectral mask that has a uniform magnitude at the indices in $\mathcal{L}$. For example, $\mathcal{L}=[N]$ corresponds to a quasi-omnidirectional beam that can be realized using a Zadoff-Chu sequence \cite{myers2019falp}. We define the sector dimension as $N_{\text{sec}}=|\mathcal{L}|$, which is the cardinality $d_2-d_1+1$. Further, we assume that $N_{\text{sec}}$ divides $N$ and define the ratio
\begin{equation}
\label{eq:defn_rho}
\rho=N/N_{\text{sec}}.
\end{equation}
The base transmit beam training vector $\mathbf{f}_{\mathrm{b}}$ is constructed such that the spectral mask $\mathbf{p}$ takes the form
\begin{equation}
\label{eq:des_mask}
    \mathbf{p}=[\underset{d_1 - 1 \, \text{times}}{\underbrace{0, \cdots,0,}}\, e^{\imj \phi_{d_1}}, e^{\imj \phi_{d_1+1}}\cdots,e^{\imj \phi_{d_2}},\, \underset{N - d_2 \, \text{times}}{\underbrace{0, \cdots,0}}]^T,
\end{equation}
to focus only along the directions associated with $\mathcal{L}$. To determine the base beam training vector $\mathbf{f}_{\mathrm{b}}$, \eqref{eqn:Ang_spctrl_msk} is inverted for a random choice of $\{\phi_{d_1},\phi_{d_1+1},\cdots , \phi_{d_2}\}$ in \eqref{eq:des_mask}, i.e.,
$\mathbf{f}_{\mathrm{b}}^{\mathrm{FC}}=\mathbf{U}_{N}\mathbf{p}/ \sqrt{N}$. Then, $\mathbf{f}_{\mathrm{b}}^{\mathrm{FC}}$ is flipped and conjugated to get $\mathbf{f}_{\mathrm{b}}$. Such a simple DFT-based construction for the in-sector beam, equivalently the spectral mask, ignores the resolution of the phase shifters and the amplitude control elements. In future, we will extend our design to low-resolution arrays.
\par When the in-sector beam training vector associated with \eqref{eq:des_mask} is used in convolutional CS, we observe that the sparse masked beamspace vector $\mathbf{x}=\mathbf{g}\odot \mathbf{p}$ is always zero over $[N] \setminus \mathcal{L}$. Here, the interest is in estimating the $N_{\text{sec}}$ coefficients of the sparse masked beamspace $\mathbf{x}$ at the indices in $\mathcal{L}$. Let $\hat{\mathbf{x}}$ be the estimate obtained via CS. Then, the entries of the beamspace estimate $\hat{\mathbf{g}}$ at the indices in $\mathcal{L}$ are computed using
\begin{equation}
     \hat{g}[\ell]=\hat{x}[\ell]p^{\ast}[\ell], \,\forall \ell \in \mathcal{L}.
\end{equation}
Our goal is to design the circulant shifts in $\Omega$ to achieve better reconstruction of $\mathbf{x}$, equivalently $\mathbf{g}$, at the indices in $\mathcal{L}$.
\vspace{-1mm}
\subsection{Circulant shift design}
We design the circulant shifts to minimize the mutual coherence \cite{ben2010coherence} of the CS matrix. A small mutual coherence is desirable as it minimizes the upper bound on the mean squared error (MSE) in the estimated sparse signal \cite{ben2010coherence}.
\par Now, we explicitly write down the CS matrix associated with \eqref{eqn:samplingOperator}. First, we replace the subsampling operator $\mathcal{P}_{\Omega}(\cdot)$ in \eqref{eqn:samplingOperator} by an $M\times N$ subsampling matrix $\mathbf{S}$. The matrix $\mathbf{S}$ is $1$ at $\{(m,c[m])\}^{M-1}_{m=0}$ and is zero at the remaining locations. We define $\mathbf{A} = \mathbf{S}\mathbf{U}_N$ and rewrite \eqref{eqn:samplingOperator} as
\begin{equation}\label{eqn:partialDFT}
    \mathbf{y} = \mathbf{A}\mathbf{x} + \mathbf{n}.
\end{equation}
The sequence of circulant shifts $\{c[m]\}^{M-1}_{m=0}$ determines the matrix $\mathbf{S}$ and the resulting CS matrix $\mathbf{A}$. 

\par As the sparse masked beamspace $\mathbf{x}$ is zero at the indices in $[N] \setminus \mathcal{L}$, the measurement model in \eqref{eqn:partialDFT} can be rewritten as
\begin{equation}\label{eqn:trunc_CS}
    \mathbf{y} = \mathbf{A}_{\mathcal{L}}\mathbf{x}_{\mathcal{L}} + \mathbf{n},
\end{equation}
where $\mathbf{A}_{\mathcal{L}}$ is an $M\times N_\text{sec}$ sub-matrix of $\mathbf{A}$ obtained by retaining only those columns with indices in $\mathcal{L}$, and $\mathbf{x}_{\mathcal{L}}$ is a sub-vector of $\mathbf{x}$ at the indices in $\mathcal{L}$.
For the model in \eqref{eqn:trunc_CS}, the coherence of the effective CS matrix $\mathbf{A}_{\mathcal{L}}$ is
\begin{equation}\label{eqn:III4}
    \mu = \underset{\{(i, j): i\neq j, i\in\mathcal{L}, j\in\mathcal{L}\}}{\max}\ \ |\mathbf{a}_{i}^{\ast}\mathbf{a}_{j}|.
\end{equation}
The coherence $\mu$ is next expressed using the notion of point spread function (PSF) in CS\cite{lustig2007sparse}.
\par We note that $\mu $ in \eqref{eqn:III4} is the maximum off-diagonal entry of an $N_\text{sec}\times N_\text{sec}$ submatrix of $|\mathbf{A}^{\ast} \mathbf{A}|$. To examine the entries of this submatrix, we define the Gram matrix
\begin{equation}
    \mathbf{G}=\mathbf{A}^{\ast} \mathbf{A}.
\end{equation}
We also define an $N\times 1$ binary vector $\mathbf{b}_{\Omega}$ that is $1$ only at the indices in the circulant shift set $\Omega$ and is zero at the other locations. Note that $\sum_{i=1}^{N}b_{\Omega}[i]=M$. We observe that $\mathbf{G}=\mathbf{U}^{\ast}_{N}\mathbf{S}^{\ast}\mathbf{S} \mathbf{U}_{N}$ and $\mathbf{S}^{\ast} \mathbf{S} = \mathrm{Diag}(\mathbf{b}_{\Omega})$. Putting these observations together, we have
\begin{equation}\label{eqn:III6}
\mathbf{G} =  \mathbf{U}_{N}^{\ast}\mathrm{Diag}(\mathbf{b}_{\Omega})\mathbf{U}_{N}.
\end{equation}
Note that $\mathbf{G}$ is circulant as any matrix of the form in \eqref{eqn:III6} is circulant \cite{manolakis2011applied}. The first row of $\mathbf{G}$ \cite{manolakis2011applied}, known as PSF in the literature \cite{lustig2007sparse}, is given by
\begin{equation}
\label{eq:PSF_defn}
    \mathrm{PSF}=\frac{1}{\sqrt{N}}\mathbf{U}_N\mathbf{b}_{\Omega}.
\end{equation}
As $\mathbf{b}_{\Omega}$ is real, the magnitude of the PSF is even symmetric.
\par We leverage the circulant structure of $\mathbf{G}$ and the even symmetry of the PSF to express the coherence in \eqref{eqn:III4} as
\begin{equation}\label{eqn:III7}
    \mu= \underset{i\in\mathcal{I}}{\max}\ \ |\mathrm{PSF}[i]|,
\end{equation}
where $\mathcal{I} = \{1,2,\cdots,(d_2-d_1)\}$.
The coherence minimization problem can then be formulated as
\begin{equation}\label{eqn:opt}
\mathcal{O}:
  \begin{cases}
    \begin{aligned}
        &\underset{\mathbf{b}_{\Omega}\in\{0,1\}^N}{{\min}}
        && \underset{i\in\mathcal{I}}{{\max}} \ \ \ |\mathrm{PSF}[i]|\\
        &\text{{s.t.}} && \sum_{t=1}^{N}b_{\Omega}[t]=M.
    \end{aligned}
  \end{cases}
\end{equation}
As solving the non-convex optimization problem in $\mathcal{O}$ is hard, we find a binary vector $\mathbf{b}^{\mathrm{opt}}$ that achieves $\mu=0$ for $M=N_{\text{sec}}$. Then, we propose a randomized subsampling technique for an arbitrary $M$ that approaches $\mathbf{b}^{\mathrm{opt}}$ as $M \rightarrow N_{\text{sec}}$.

\par To motivate our approach, let us first assume $d_1=0$, $d_2=N-1$, and $M=N$, and solve the problem in $\mathcal{O}$. Here, the optimal solution for $\mathbf{b}_{\Omega}$ is the all-ones vector which corresponds to $\Omega=[N]$. In practice, however, we are interested in the regime where $M \ll N$. When $M \ll  N$, prior work on CS discusses that random subsampling, where $\Omega$ comprises $M$ random samples drawn without replacement from $\{0,1,...,N-1\}$ with uniform probabilities, is a good choice \cite{candes2008introduction}. Such a construction for $\mathbf{b}_{\Omega}$ converges to the optimal all-ones configuration as $M \rightarrow N$.
\par For the in-sector approach, we observe from \eqref{eqn:III7} that not all entries of the PSF need to be zero to achieve the optimal solution, i.e., $\mu = 0$. Specifically, only those entries of the PSF that belong to the set $\mathcal{I}$ have to be zero. In Lemma \ref{lemma}, we discuss the optimal solution when $M=N_{\text{sec}}$.
\begin{lemma}\label{lemma}
For $M = N_{\text{sec}}$, $\mathbf{b}_{\Omega}$ corresponding to $\Omega = \left\lbrace 0, \rho, 2\rho, \cdots  (N_{\text{sec}}-1)\rho \right\rbrace$, with $\rho$ defined in \eqref{eq:defn_rho}, is an optimal solution of $\mathcal{O}$ as it achieves $\mu=0$. This uniform pattern in $\Omega$ was also discussed in \cite{chen2020convolutional} for CS via linear convolution.
\end{lemma}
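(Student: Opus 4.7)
The plan is to directly evaluate the PSF under the proposed sampling pattern and show it vanishes on the index set $\mathcal{I}$. Substituting $\Omega=\{0,\rho,2\rho,\cdots,(N_{\text{sec}}-1)\rho\}$ into \eqref{eq:PSF_defn} gives
\begin{equation*}
\mathrm{PSF}[i]=\frac{1}{N}\sum_{k=0}^{N_{\text{sec}}-1}e^{\mathsf{j}2\pi i k\rho/N}=\frac{1}{N}\sum_{k=0}^{N_{\text{sec}}-1}e^{\mathsf{j}2\pi i k/N_{\text{sec}}},
\end{equation*}
where I have used $\rho/N=1/N_{\text{sec}}$ from \eqref{eq:defn_rho}. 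This is a standard geometric sum over the $N_{\text{sec}}$-th roots of unity.

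Next I would evaluate this sum to show $\mathrm{PSF}[i]=0$ whenever $i$ is not a multiple of $N_{\text{sec}}$, and $\mathrm{PSF}[i]=1/\rho$ when $i$ is a multiple of $N_{\text{sec}}$. Then I would invoke the relation $N_{\text{sec}}=|\mathcal{L}|=d_2-d_1+1$ to observe that
\begin{equation*}
\mathcal{I}=\{1,2,\cdots,d_2-d_1\}=\{1,2,\cdots,N_{\text{sec}}-1\},
\end{equation*}
which contains no nonzero multiple of $N_{\text{sec}}$. Consequently $\mathrm{PSF}[i]=0$ for every $i\in\mathcal{I}$, and \eqref{eqn:III7} yields $\mu=0$.

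Finally I would note that the constraint in $\mathcal{O}$ is trivially satisfied, since $\sum_t b_{\Omega}[t]=|\Omega|=N_{\text{sec}}=M$, and that since $\mu$ is nonnegative by definition, the value $\mu=0$ is globally optimal. Hence the prescribed $\mathbf{b}_{\Omega}$ is an optimal solution of $\mathcal{O}$. There is no real obstacle here; the argument is a one-line roots-of-unity cancellation, with the only subtlety being the bookkeeping that $\mathcal{I}$ precisely excludes the aliasing spikes located at multiples of $N_{\text{sec}}$, which is exactly why uniformly spaced samples at stride $\rho$ are tuned to the sector dimension.
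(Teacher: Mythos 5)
Your proof is correct and is essentially the same argument as the paper's: you evaluate the geometric sum over $N_{\text{sec}}$-th roots of unity directly, whereas the paper reaches the identical formula $\mathrm{PSF}[i]=\rho^{-1}e_{0}[\langle i\rangle_{N_{\text{sec}}}]$ by viewing $\mathbf{b}_{\Omega}$ as $\mathbf{1}_{N_{\text{sec}}}$ upsampled by $\rho$ and citing the DFT upsampling property, which amounts to the same cancellation. Your version is slightly more self-contained, and you make explicit two points the paper leaves implicit (the feasibility check $|\Omega|=M$ and the observation that $\mu\ge 0$ makes $\mu=0$ globally optimal).
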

\begin{proof}
We denote an all-ones vector of length $N_{\text{sec}}$ by $\mathbf{1}_{N_{\text{sec}}}$. Then, we can express $\mathbf{b}_{\Omega}$ as the upsampled version of $\mathbf{1}_{N_{\text{sec}}}$ by $\rho$ \cite[Section~7.4.7]{manolakis2011applied}. This is obtained by inserting $\rho-1$ zeros between consecutive entries of $\mathbf{1}_{N_{\text{sec}}}$. Therefore, the upsampled vector is $1$ only at the locations in $\Omega=\left\lbrace 0, \rho, 2\rho, \cdots  (N_{\text{sec}}-1)\rho \right\rbrace$. Next, we observe that $\mathbf{U}_{N_{\text{sec}}}\mathbf{1}_{N_{\text{sec}}} = \sqrt{N_{\text{sec}}}\mathbf{e}_{0}$, where $\mathbf{e}_{0}$ is the first column of the $N_{\text{sec}} \times N_{\text{sec}}$ identity matrix. Using the properties of the DFT for upsampled sequences over \eqref{eq:PSF_defn}\cite[Section 7.4.7]{manolakis2011applied}, we can write
\begin{equation}\label{eqn:psf_final}
    \mathrm{PSF}[i] = {\rho}^{-1}e_{0}[\langle i\rangle_{N_{\text{sec}}}].
\end{equation}
We observe from \eqref{eqn:psf_final} that $\mathrm{PSF}[i] = 0$ for $i\in\mathcal{I}$. Further, the PSF repeats every $N_{\text{sec}}$ units, i.e., $\mathrm{PSF}[i+k N_{\text{sec}}]=\mathrm{PSF}[i]\, \forall i,k$. This periodic structure is shown in Fig. \ref{fig:uc1A}.
\end{proof}
\par In Fig. \ref{fig:uc1A}, we show the PSFs associated with the subsampling scheme in Lemma \ref{lemma} and a particular realization of random subsampling for the same number of measurements, i.e., $M=N_{\text{sec}}$. We refer to the approach in Lemma \ref{lemma} as proposed circular shifts (PCS), and to the fully random subsampling method as random circular shifts (RCS). We observe that the PSF is $0$ within the region of interest, i.e., $\mathcal{I}$, for PCS in Lemma \ref{lemma}. Also, we notice that RCS which selects $M$ indices at random from $[N]$ results in a larger $\mu$.
\par When the number of measurements is less than the sector dimension, i.e., $M<N_{\text{sec}}$, we propose to select $M$ entries from $\left\lbrace 0, \rho, 2\rho, \cdots  (N_{\text{sec}}-1)\rho \right\rbrace$ uniformly at random. Specifically, $\Omega$ in PCS for $M<N_{\text{sec}}$ is a subset of the optimal set in Lemma \ref{lemma} for $M=N_{\text{sec}}$. Thus, PCS converges to the optimal solution in Lemma \ref{lemma} as $M \rightarrow N_{\text{sec}}$. In Fig.~\ref{fig:uc1B}, we plot the cumulative distribution function (CDF) of $\mu$ with the proposed scheme. We observe that PCS achieves a smaller $\mu$ than the RCS. Fig.~\ref{fig:prpsd} shows a specific realization of PCS and RCS for in-sector CS. Finally, the proposed in-sector CS methods have the same computational complexity as standard CS. This is because we just change the CS matrix to a new randomized construction, and make no changes to the CS algorithm.
\vspace{-6mm}
\begin{figure}[!h]
\centering
\subfloat{\includegraphics[scale=1.1]{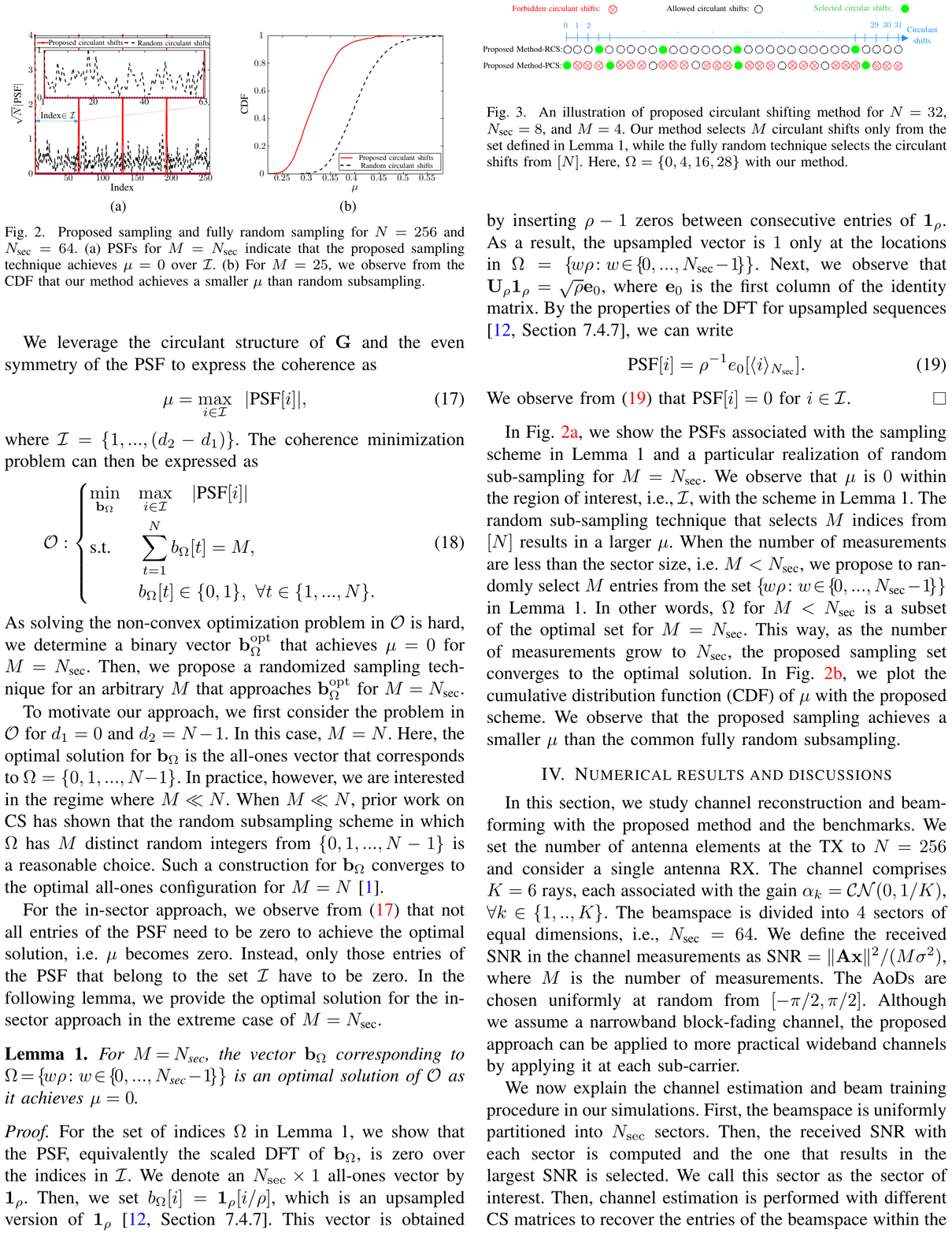}\label{fig:uc1A}}
\hfil
\subfloat{\includegraphics[scale=1.1]{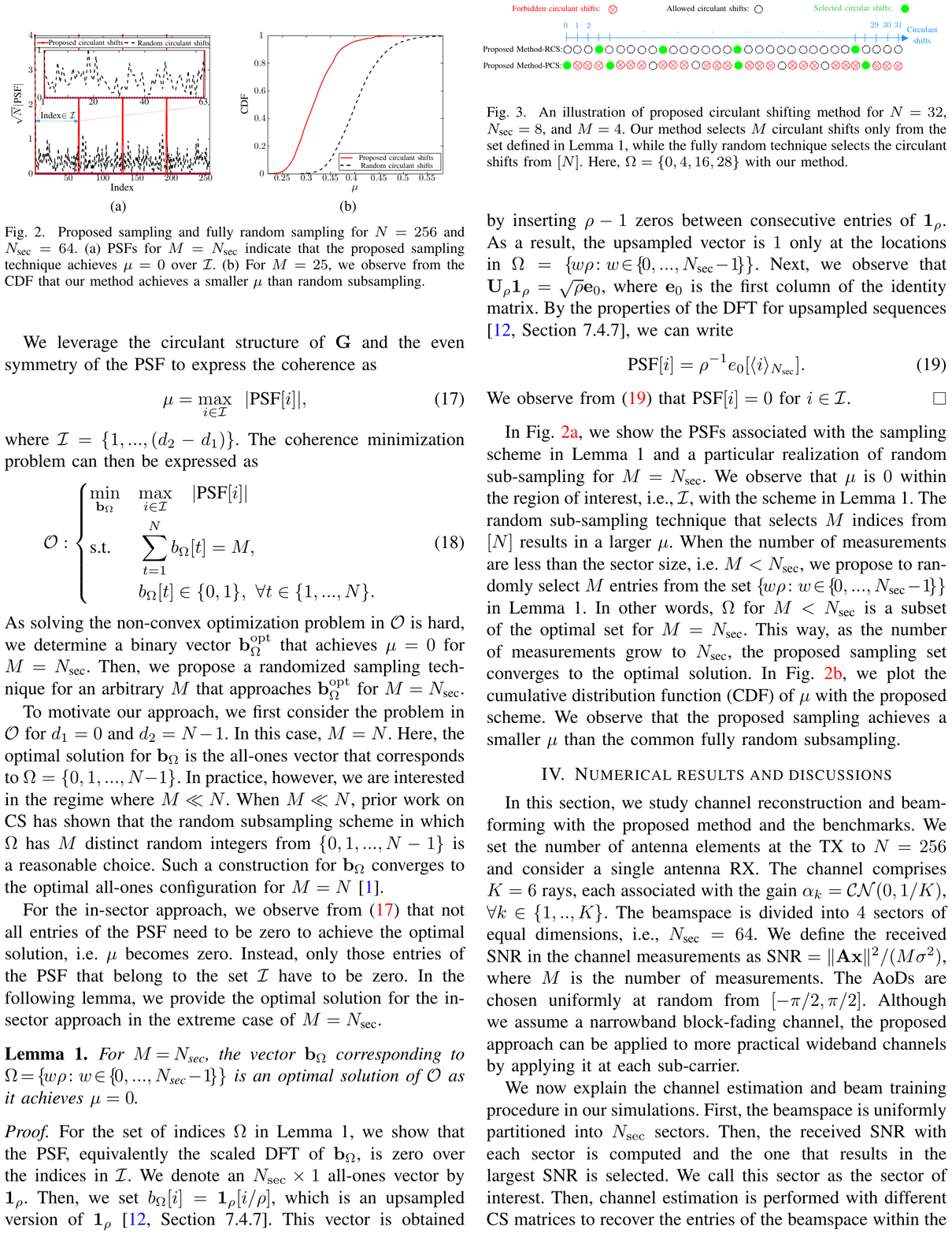}\label{fig:uc1B}}
\vspace{-2mm}
\caption{\small Proposed subsampling (PCS) and fully random subsampling (RCS) for $N=256$ and $N_{\text{sec}}=64$. (a) PSFs for $M=N_{\text{sec}}$ indicate that PCS achieves $\mu=0$ over $\mathcal{I}$. (b) For $M=25$, we observe from the CDF that PCS achieves a smaller $\mu$ than the fully random RCS.
\normalsize}
\end{figure}
\vspace{-2mm}
\begin{figure}[!h]
\vspace{-01mm}
\centering
\includegraphics[scale=1]{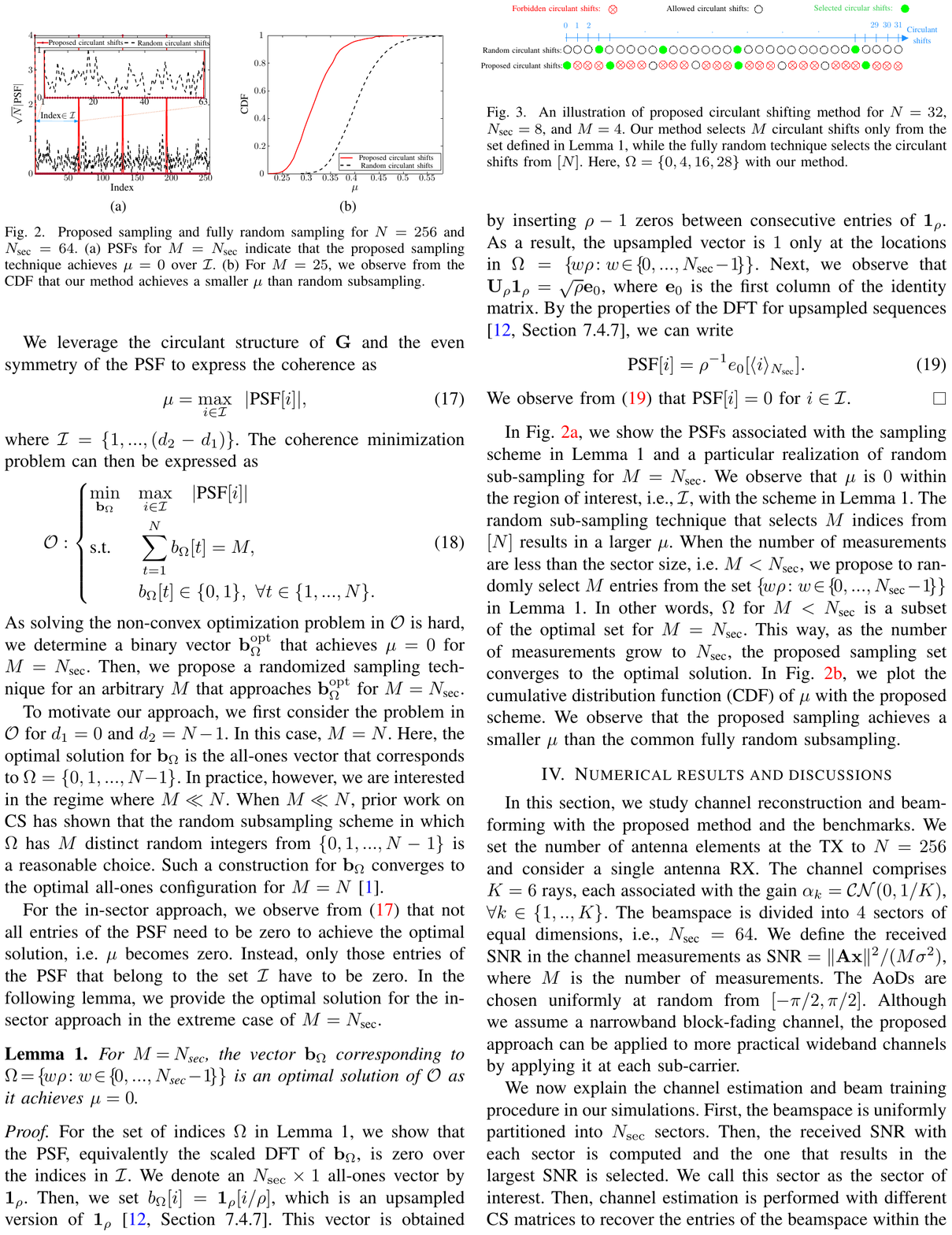}
\vspace{-02mm}
\caption{\small Proposed circulant shifts (PCS) for $N=32$, $N_{\text{sec}}=8$, and $M=4$. Here, $\rho=32/8=4$. The $M$ circulant shifts in PCS and RCS are chosen at random from $\{0,4,8,\cdots 28\}$ and $\{0,1,2,\cdots 31\}$ respectively. In this example, $\Omega=\{0,4,16,28\}$ with PCS.
\normalsize}\label{fig:prpsd}
\vspace{-4mm}
\end{figure}
\section{Simulations}\label{sec5}
We consider a narrowband mmWave system operating at a carrier frequency of $38$ GHz, with $N=256$ antennas at the TX and a single antenna RX. The TX-RX distance is $15$ $\mathrm{m}$. We use $100$ urban micro non-line-of-sight channels from the NYU simulator \cite{sun2017novel}. Note that the AoDs can be off-grid. The beamspace is divided into $4$ sectors of equal dimension, i.e., $N_{\text{sec}}=64$ and thus $\rho=4$. The SNR in the received CS measurements is $\text{SNR}= \mathbb{E}[\Vert \mathbf{Fh}\Vert^2]/(M\sigma^2)$. 
\par We now explain SLS and in-sector channel estimation in our simulations. First, the TX applies each of the $4$ sectored beam patterns, and the RX records the received power with each beam pattern. Next, the sector that results in the largest received power is selected. Then, in-sector channel estimation is performed with different CS solutions using the orthogonal matching pursuit (OMP) algorithm \cite{tropp2007signal} to recover beamspace entries within the desired sector, i.e., $\hat{\mathbf{g}}_\mathcal{L}$. The channel estimate is then  $\hat{\mathbf{h}}=[\mathbf{U}_{N}]_{\mathcal{L}}\hat{\mathbf{g}}_\mathcal{L}$. The normalized MSE (NMSE) within the desired sector is defined as $\mathbb{E}[\|\mathbf{g}_{\mathcal{L}}-\mathbf{\hat{g}}_{\mathcal{L}}\|^2]/\mathbb{E}[\|\mathbf{g}_{\mathcal{L}}\|^2]$. The achievable rate is defined as $\log_{2}(1+|\mathbf{f}^{\text{mrt}}\mathbf{h}|^2/\sigma^2)$, where $\mathbf{f}^{\text{mrt}} = \mathbf{\hat{h}}^{\ast}/\|\mathbf{\hat{h}}\|$. Although this work estimates the channel only within the desired sector, our algorithm can be applied independently over each sector to estimate the entire angle domain channel. Such an approach is robust to noise when compared to standard CS techniques that use wide beams.
\par We also evaluate performance with the proposed CS beams when a $4\times$ oversampled DFT dictionary (say $\mathbf{D}$ of size $N \times 4N$) \cite{duarte2013spectral} is used to represent the channel, i.e., $\mathbf{h}=\mathbf{D}\tilde{\mathbf{g}}$. In this case, the same design of circulant beam training sequences is used in \eqref{eqn:RXM} to solve for $\tilde{\mathbf{g}} \in \mathbb{C}^{4N \times 1}$. Here, the $M \times 4N$ CS matrix $\mathbf{FD}$ is used in OMP. Then, a $4N$ dimensional rectangular window that is $1$ within the sector of interest is applied in every iteration of OMP to get $\hat{\tilde{\mathbf{g}}}_{\mathrm{win}}$. Finally, the in-sector channel estimate is computed as $\hat{\mathbf{h}}=\mathbf{D}\hat{\tilde{\mathbf{g}}}_{\mathrm{win}}$.
\begin{figure}[!t]
\centering
\includegraphics[trim=0.2cm 0cm 0.2cm 0cm, width=0.27\textwidth]{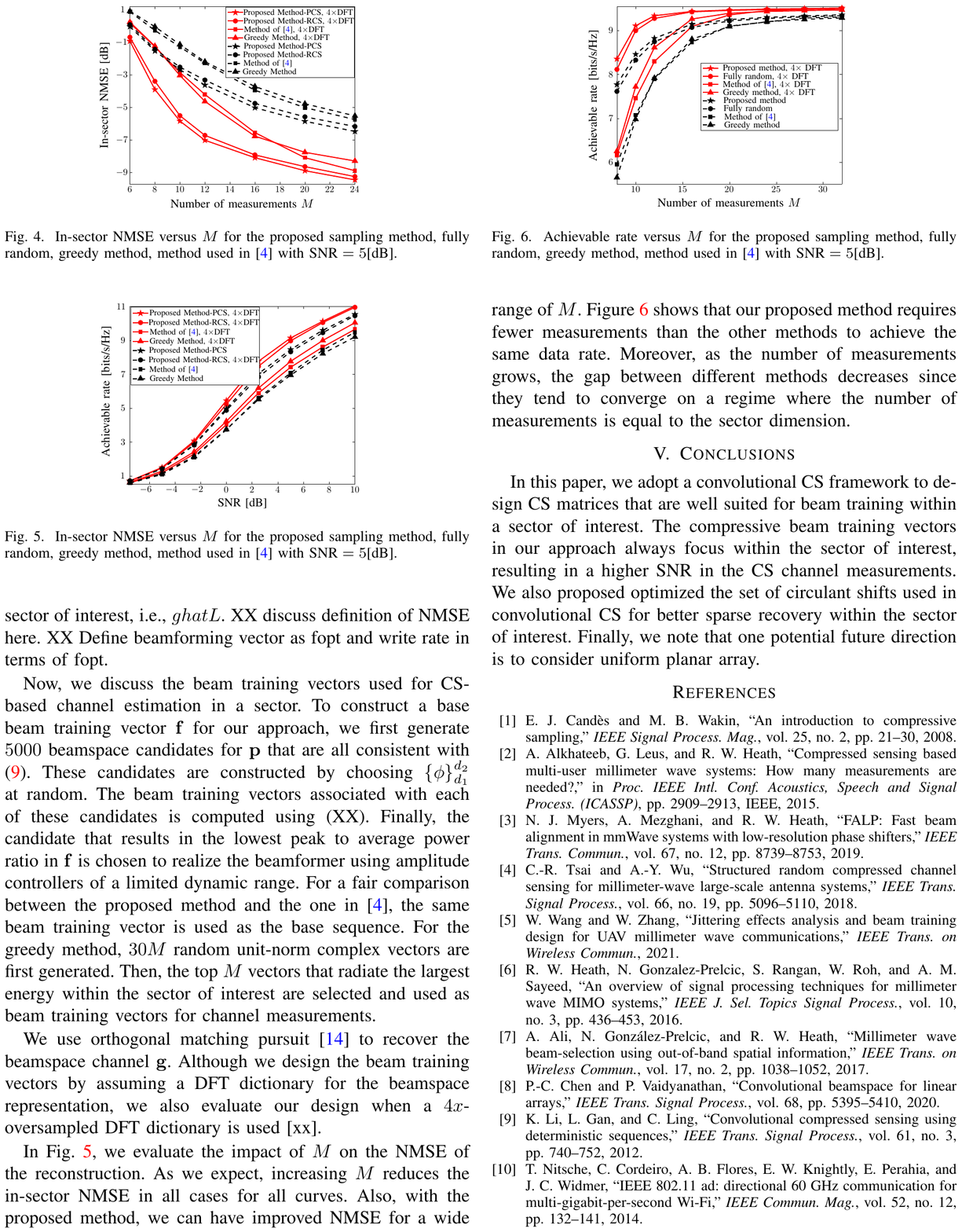}
\vspace{-02mm}
\caption{\small In-sector NMSE with the number of CS measurements for $N=256$, $N_{\text{sec}}=64$, and $\text{SNR}=5$ dB. PCS results in a smaller $\mu$ and achieves a lower NMSE than the fully random RCS approach.
\normalsize}\label{fig:nr1}
\end{figure}
\begin{figure}[!t]
\vspace{-02mm}
\centering
\includegraphics[trim=0.2cm 0cm 0.2cm 0cm, width=0.27\textwidth]{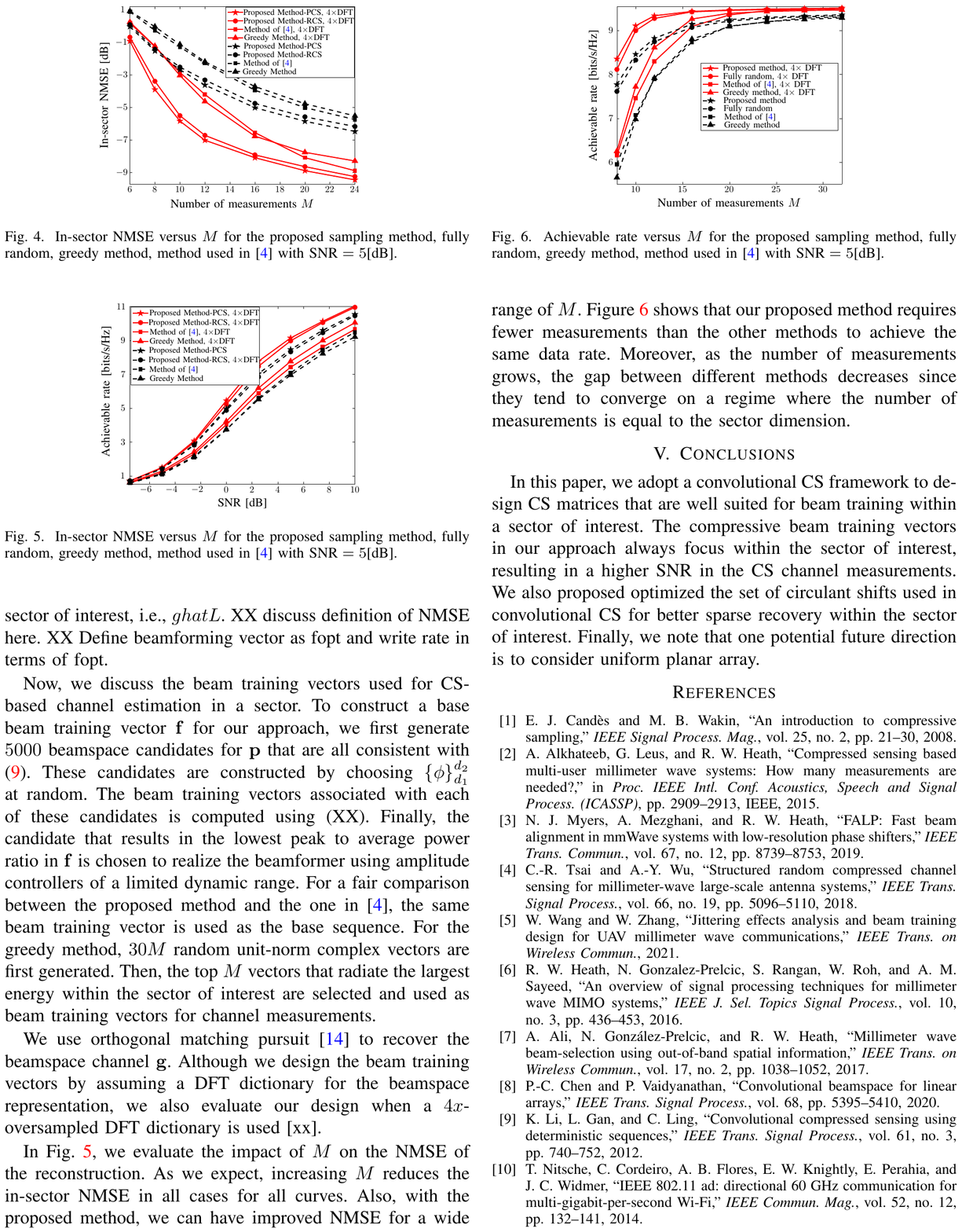}
\vspace{-03mm}
\caption{\small The achievable rate with our method is larger than that with the benchmarks. Here, $N=256$, $N_{\text{sec}}=64$, and $M=10$.
\normalsize}\label{fig:nr2}
\vspace{-4mm}
\end{figure}
\par To construct a base beam training vector $\mathbf{f}_{\mathrm{b}}$ for our approach, we generate $5000$ beamspace candidates for $\mathbf{p}$ that are all consistent with \eqref{eq:des_mask}. This is done by choosing $\{\phi\}^{d_2}_{d_1}$ uniformly at random. Then, the candidate that results in the lowest peak to average power ratio for $\mathbf{f}_{\mathrm{b}}$ is chosen to realize the beamformer using amplitude control elements of a limited dynamic range. For a fair comparison with \cite{tsai2018structured}, the same $\mathbf{f}_{\mathrm{b}}$ is used in the method of \cite{tsai2018structured}. For the greedy method \cite{ali2017millimeter}, $30M$ random unit-norm complex vectors are first generated. Then, the top $M$ vectors that radiate the largest energy within the sector of interest are used to acquire CS measurements.
\par We study the NMSE versus $M$ and the achievable rate versus SNR. Fig.~\ref{fig:nr1} shows the NMSE with our convolutional CS-based in-sector channel estimation, when circulant shifts are chosen according to PCS and RCS. The same beamformer was used to evaluate PCS and RCS in our method. PCS results in a lower NMSE than RCS because the former results in lower aliasing artifacts within the sector of interest, equivalently a smaller $\mu$, as observed in Fig. \ref{fig:uc1B}. Our method also outperforms other benchmarks in \cite{ali2017millimeter} and \cite{tsai2018structured} that result in an energy leakage outside the sector of interest.
Finally, we observe from Fig.~\ref{fig:nr2} that our approach achieves a slightly higher rate with proposed circulant shifts than random shifts, for the same computational complexity. The rate is also higher than the other benchmarks.
\section{Conclusions}\label{sec6conclusion}
In this paper, we adopt a convolutional CS framework to design CS matrices that are well suited for beam training within a sector of interest. The compressive beam training vectors in our approach always focus on the sector of interest, resulting in a higher SNR in the CS channel measurements. We also designed a new randomized set of circulant shifts in convolutional CS for spatial channel estimation within the sector of interest. The designed set of shifts achieves better in-sector channel reconstruction than fully random shifts.


\ifCLASSOPTIONcaptionsoff
  \newpage
\fi

\bibliography{References}
\bibliographystyle{ieeetr}

\end{document}